\def \v#1{{\bm #1}}
\def \be {\begin{equation}}
\def \ee {\end{equation}}
\def \bml{\begin{multline}}
\def \eml{\end{multline}}
\newcommand{\Exp}[1]{\,\mathrm{e}^{\mbox{\footnotesize$#1$}}}
\newcommand{\I}{\mathrm{i}}
\newcommand{\ket}[1]{|#1\rangle}
\newcommand{\bra}[1]{\langle#1|}
\newcommand{\tr}[1]{\mathrm{tr}\left(#1\right)}
\newcommand{\Tr}[1]{\mathrm{Tr}\left\{#1\right\}}
\newcommand{\Trabs}[1]{\mathrm{TrAbs}\left\{#1\right\}}
\def \ds{\displaystyle}
\newcommand{\vecin}[2]{\langle#1|#2\rangle}
\newcommand{\sldin}[2]{\langle#1,#2\rangle_{\rho_{\theta}}}
\newcommand{\rldin}[2]{\langle#1,#2\rangle_{\rho_{\theta}}^{+}}
\def \del{\partial}
\def \stheta{\v{s}_{\theta}}
\def \cB{{\cal B}}
\def \cM{{\cal M}}
\def \cH{{\cal H}}
\def \cX{{\cal X}}
\def \sofh{{\cal S}({\cal H})}
\def \bbr{{\mathbb R}}
\def \bbc{{\mathbb C}}
\def \sofc2{{\cal S}({\mathbb C}^2)}
\def \lofh{{\cal L}({\cal H})}
\def \lofhh{{\cal L}_h({\cal H})}
\newtheorem{theorem}{Theorem}[section]
\newtheorem{lemma}[theorem]{Lemma}
\newtheorem{proposition}[theorem]{Proposition}
\newenvironment{proof}[1][Proof:]{\begin{trivlist}
\item[\hskip \labelsep {\bfseries #1}]}{\end{trivlist}}
\newcommand{\qed}{\nobreak \ifvmode \relax \else
      \ifdim\lastskip<1.5em \hskip-\lastskip
      \hskip1.5em plus0em minus0.5em \fi \nobreak
      \vrule height0.75em width0.5em depth0.25em\fi}
\begin{document}

%%%% For arxiv
\title{Parameter estimation of qubit states with unknown phase parameter}
\author{Jun Suzuki\\
(junsuzuki@is.uec.ac.jp)}
\date{\today}
%\email{junsuzuki@is.uec.ac.jp}
\affiliation{
Graduate School of Information Systems, The University of Electro-Communications,\\
1-5-1 Chofugaoka, Chofu-shi, Tokyo, 182-8585 Japan
}

\begin{abstract}
We discuss a problem of parameter estimation for quantum two-level system, qubit system, in presence of 
unknown phase parameter. 
We analyze trade-off relations for mean-square errors when estimating relevant parameters with separable 
measurements 
based on known precision bounds; the symmetric logarithmic derivative Cramer-Rao bound 
and Hayashi-Gill-Massar (HGM) bound. 
We investigate the optimal measurement which attains the HGM bound and discuss its properties. 
We show that the HGM bound for relevant parameters can be attained asymptotically 
by using some fraction of given $n$ quantum states to estimate the phase parameter. 
We also discuss the Holevo bound which can be attained asymptotically by a collective measurement. 
\end{abstract}

%\keywords{Quantum parameter estimation, nuisance parameter, trade-off relation}

\maketitle %For arxiv
%=====================================================================
\section{Introduction}
Quantum statistical inference is of fundamental importance 
not just from foundation of quantum information theory but also in view of 
practical applications. For example, at a certain stage of any quantum information processing 
protocol, one has to know the state precisely to proceed the protocol. 
Typically, the quantum states to be estimated are not completely unknown, 
but we have partial information about them. This is contrast to quantum tomography 
where one has to identify a quantum state by informationally complete measurements. 

Quantum parameter estimation problem, which is a subclass of quantum statistical inference problems, 
assumes that a given quantum state is parameterized with a finite number of continuous parameters. 
One wishes to infer the value of these parameters by performing a measurement and 
making an estimate from measurement outcomes. Parameter estimation problem in 
classical statistics is a well-established subject and there are large numbers of literature 
available ranging from rigorous mathematical formulation to very practical applications. 
Quantum parameter estimation was initiated by Helstrom in 60s and 
developed by Holevo, Yuen-Lax, and others \cite{helstrom, holevo, yl73}. 
The new insight into this problem was triggered by Nagaoka in the late 80s 
where he developed new language based on information geometry in classical statistics \cite{ANbook} 
and opened asymptotical analysis of estimation. Some of his contributions are reprinted in Ref.~\cite{hayashi}. 
The field of quantum estimation theory has recently gained great attentions also 
from physics community. One important motivation is the study of 
quantum metrology, that is, high precision measurement  which 
go beyond existing classical precision limit \cite{glm11}. 

The aim of this paper is to discuss some of unexplored aspect of 
quantum parameter estimation problem. We analyze an estimation problem 
in presence of unknown parameter, called a {\it nuisance parameter} in statistics, and discuss 
effects of this nuisance parameter. This problem is well-known in classical statistics \cite{lc,bnc}, 
yet few results are known in quantum case. For this purpose, we take the simplest 
quantum system, a qubit system, and we apply known precision bound to our estimation problem. 
We see that effects of nuisance parameters are important in general. 
For a very special case, asymptotically achievable bound can be 
obtained as shown in this paper.  

A quantum parametric model studied in this paper is 
\[
\rho_{\theta}=\frac{1}{2}\left(\begin{array}{cc}1+\theta_2 & \theta_1 \Exp{-\I\theta_3} \\[.5ex] 
\theta_1 \Exp{\I\theta_3} & 1-\theta_2\end{array}\right), 
\]
where the parameters $\theta_1,\theta_2$ are of interest and 
the phase parameter in the off-diagonal component is not.  
This model is physically motivated from wave-particle duality, 
where one discusses the trade-off between the fringe visibility $\Leftrightarrow\ |\tr{\rho_{\theta}\sigma_+}|=|\theta_1|$ 
and the which-way information $\Leftrightarrow\ |\tr{\rho_{\theta}\sigma_3}|=|\theta_2|$, whereas 
the value of phase $\theta_3$ itself is irrelevant. 
We should not forget to mention several works related to the present work. 
Similar parameter models for mixed qubit states was discussed by several authors. 
Among them, Gill and Massar derived a very general trade-off relation known as 
the Gill-Massar inequality and derived an achievable bound for qubit case \cite{GM00}. 
Bagan {\it et al} studied two and three parameter model with different parametrization 
and different figures of merit \cite{bbgmm06}. 
Hayashi and Matsumoto performed a general analysis on asymptotic performance in 
qubit system and analyzed two and three parameter models.  

Our model is different from the previous results in three aspects. 
Firstly, parametrization is different and we do not use neither 
cartesian nor spherical coordinates in the Bloch vector representation as were analyzed in literature. 
Secondly, we shall not assume the value of phase is known. 
When this value is completely known, the model is reduced 
to two-parameter model which lies on an equatorial plane of the Bloch sphere. 
In contrast, we are interested in analyzing errors in estimating two parameters 
without knowing the value of phase. Lastly, in many studies on quantum metrology, 
one is interested in estimating the value of phase and the amplitude damping 
or phase dephasing caused by external noise are not \cite{emfd11,ddkg12}. 
In recent publications \cite{cdbw14,vdgjkkdbw14}, authors point out 
that we cannot estimate the value of phase in presence of noise in particular 
noise model. They derive a trade-off relation between 
error in these parameters together with experimental demonstration.  
Instead, this paper aims to shed light on parameter estimation problem 
in presence of unknown parameter based on quantum parameter estimation perspective. 

This paper is organized as follows. Section \ref{sec2} provides 
a brief summary of basic theorems in quantum estimation theory. 
Section \ref{sec3} discusses our parametric model and quantum 
estimation in presence of unknown phase parameter. 
Section \ref{sec4} shows the ultimate bound based on the Holevo bound. 
We also discuss the general structure of quantum estimation with nuisance parameters. 
We close the paper with conclusion and outlook in Section \ref{sec5}. 

\section{Preliminaries}\label{sec2}
In this section we summarize definitions for basic terminologies and quantities to 
analyze quantum parametric models. Previously known results are listed without proofs.  
For more details, readers are referred to books \cite{holevo, ANbook, hayashi, petz} and 
the concise summary by Hayashi and Matsumoto \cite{HM08}. 

\subsection{Estimation problems and Fisher information in classical and quantum cases}
Let $\cH$ be a finite dimensional complex Hilbert space and $\lofh$ denote 
the set of all linear operator from $\cH$ to itself. A quantum state $\rho$ 
is an element of $\lofh$, which is non-negative and has a unit trace. The 
totality of quantum states on $\cH$ is written as $\sofh:= \{\rho\in\lofh \, |\, \rho\ge 0, \tr\rho=1\}$. 
A measurement on a given quantum state $\rho$ is described by 
a positive operator-valued measure (POVM) or probability operator measurement, 
which is a set of non-negative operators summed up to the identity operator on $\cH$. 
In this paper, we shall only consider discrete POVMs whose elements are countable. 
We denote the label set by $\cX$. 
The POVM is expressed as $\Pi=\{ \Pi_x\in\lofh\,|\, \Pi_x\ge0,\sum_{x\in\cX}\Pi_x=I\}_{x\in\cX}$. 
%The set of all POVMs on $\cH$ is denoted as $\mathscr{M}(\cH):=\{\Pi\,|\,\Pi$ is a POVM on $\cH\}$. 
One of the axioms of quantum mechanics (Born's interpretation) provides a simple rule: 
the probability distribution for detecting the measurement outcome $x$ when 
a POVM $\Pi$ is performed for a given quantum state $\rho$ is $p_\rho(x)=\tr{\rho\Pi_x}$. 
This is a conditional probability distribution and the condition $\rho$ is omitted when is clear from the context. 

A quantum parametric model is given as a family of quantum states on $\cH$, 
which is parametrized by a $k$-dimensional parameter $\theta=(\theta_1,\theta_2,\dots,\theta_k)\in\bbr^k$ and is denoted by 
\be
\cM_Q=\{\rho_\theta\,|\,\theta\in\Theta  \}. 
\ee
Here the parameter set $\Theta$ is assumed to be an open subset of $\bbr^k$ and 
we also assume that $\rho_\theta$ varies smoothly enough so that no singular 
behaviors for information quantities defined later. 
Given a quantum model $\cM_Q$, the aim of quantum statistician is two-fold: 
First she performs a good measurement $\Pi$  
and then makes a good estimate $\hat\theta=(\hat\theta_1,\hat\theta_2,\dots,\hat\theta_k)$ based on her measurement outcomes. 
The quality of her estimation is measured according to a given figure of merit 
such as the mean square error, minimax error, Bayesian criterion, etc. 
In the following discussion, we choose the figure of merit as the mean square error (MSE)  defined by 
\be
v_{\theta,ij}[\Pi,\hat{\theta}] := \sum_{x\in\cX} (\hat{\theta}_i(x)-\theta_i) (\hat{\theta}_j(x)-\theta_j)\tr{\rho_{\theta}\Pi_x}, 
\ee
where the indices are $i,j=1,2,\dots,k$ and the $k\times k$ real matrix 
$V_{\theta}[\Pi,\hat{\theta}]:=[v_{\theta,ij}]_{i,j\in\{1,2,\dots,k\}}$ 
is called a MSE matrix. It is straightforward to see the MSE matrix is symmetric and non-negative matrix. 
The second process above is the same as (classical) statistics and is described 
by a function $\hat{\theta}$: $\cX\rightarrow \bbr^k$ 
(In general, the range satisfies $\hat{\theta}(\Theta)\supset \Theta$. 
We take $\hat{\theta}(\Theta)= \Theta$ without loss of generality).  
The set $(\Pi,\hat{\theta})$ is to be called a quantum estimator or simply an estimator 
and is denoted as $\hat{\Pi}=(\Pi,\hat{\theta})$. 
Throughout our discussion, we restrict ourselves to find the best estimator satisfying the locally unbiased 
condition, that is for a given true value $\theta\in\Theta$, the estimator $\hat{\Pi}$ needs to satisfy 
the following condition for all $i,j=1,2,\dots,k$,
\be\label{lucond}
\sum_{x\in\cX} \hat{\theta}_i(x) \tr{\rho_{\theta}\Pi_x}=\theta_i,\  
\sum_{x\in\cX} \hat{\theta}_i(x)\tr{\del_j\rho_{\theta}\Pi_x}=\delta_{ij},
\ee
where $\del_i=\del/\del\theta_i$ is the partial derivative about $\theta_i$ and $\delta_{ij}$ is the Kronecker delta. 
We remind that this locally unbiased condition is much weaker than unbiased condition where 
one demands $\sum_{x\in\cX} \hat{\theta}_i(x) \tr{\rho_{\theta}\Pi_x}=\theta_i$ holds for all values of $\theta\in\Theta$. 

A problem of finding an optimal (locally unbiased) quantum estimator $\hat{\Pi}$ is to minimize the MSE matrix 
$V_{\theta}[\hat{\Pi}]$ for a given model. 
In contrast to a (classical) parameter estimation problem, a quantum problem, 
however, does not exhibit the general solution as a matrix inequality 
except for special cases. One tractable formulation of the problem is 
to minimize a weighted trace of the MSE matrix, which is a scalar 
quantity; $\mathrm{Tr}\{W V_\theta[\hat{\Pi}]\}$.  
Here a $k\times k$ positive matrix $W$ is called a weight matrix and 
can be chosen arbitrary. To distinguish traces for density matrices and 
MSE matrices, we use lower case letter for quantum state and 
upper case letter for the latter. 
Thus, our problem for a quantum parameter estimation problem is 
to find the precision bound which is defined as 
\begin{equation}  \label{opt}
 C_\theta[W]=\min_{\hat{\Pi}:\mathrm{l.u. at}\, \theta}  \mathrm{Tr}\{W V_\theta[\hat{\Pi}]\},
\end{equation}
where $\mathrm{l.u. at}\, \theta$ indicates the optimization is carried under the locally unbiased condition 
\eqref{lucond} and the optimal quantum estimator is denoted as $\hat{\Pi}_{opt}[W]$. 
%:=\mathrm{arg}\hspace{0mm}\smash{\min_{\hspace{-4mm}\hat{\Pi}:\mathrm{l.u. at}\, \theta}}   \mathrm{Tr}\{W V_\theta[\hat{\Pi}]\}$. 

As in (classical) estimation problems, we are given an $n$ copy of quantum states and 
is mathematically represented by a tensor product as $\rho_\theta^{\otimes n}=\bigotimes_{i=1}^n\rho_\theta$. 
This is analogous situation to identically and independently distributed (i.i.d.) scenario in probability theory 
and the state $\rho_\theta^{\otimes n}$ is referred to as an i.i.d. quantum state. 
Upon estimating a parameter $\theta$ for a given i.i.d. states, a significant difference arises for the quantum case. 
A quantum statistician can choose different strategies: One is to 
perform a POVM written as a tensor product $\Pi^{(n)}=\{ \Pi^{(1)}_x\otimes\Pi^{(2)}_x\otimes\dots \Pi^{(n)}_x\}_{x\in\cX} $, 
and the other is a general POVM on the joint Hilbert space $\cH^{\otimes n} $ which cannot be expressed as a tensor product. 
The former is called a separable measurement, and the latter is collective measurement in literature. 
It is known that collective measurements are more powerful than separable ones in general. 
In the following, we consider a separable measurement mainly and collective measurement scheme 
will be discussed in Section \ref{sec4}. 

One way to see why a quantum estimation problem is non-trivial is as follows. 
For (classical) estimation problems to estimate the probability distribution $p_{\theta}$, 
the fundamental precision bound for the MSE 
is given by the Cram\'er-Rao (CR) inequality 
which states that for any locally unbiased estimator the MSE matrix is bounded as 
\be
V_{\theta}[\hat{\theta}]\ge (J_\theta[p_\theta])^{-1}.  
\ee
In this inequality, $J_\theta[p_\theta]$ denotes the Fisher information matrix 
for a given probability distribution $p_\theta$ whose $(i,j)$ component is defined by 
\be \label{clfisher}
J_{\theta,ij}:=\sum_{x\in\cX} p_{\theta}(x) \del_i \ell_{\theta}(x) \del_j \ell_{ \theta}(x),
\ee
with $\del_i \ell_{\theta}(x)=\del_i\log p_{\theta}(x)$ the $i$th logarithmic derivative. 
This bound can be achieved asymptotically, for example, by the maximum likelihood estimator. 
For the quantum case, let us fix a measurement $\Pi$ on $\rho_\theta$ then the best 
estimator $\hat{\theta}$ should be given by the above CR bound as 
$V_{\theta}[\hat{\Pi}]\ge (J_\theta[\Pi])^{-1}$. Here the Fisher information 
matrix is calculated according to the probability distribution: $p_\theta(x)=\tr{\rho_{\theta}\Pi_x}$ 
and solely determined by the choice of a POVM.  
We remind ourselves that partial differentiations $\del_i$ must act only on the state 
in the probability distribution $\tr{\rho_{\theta}\Pi_x}$. 
To find the optimal estimator for a given quantum model is then reduced to 
minimize the inverse of Fisher information matrix $(J_\theta[ \Pi])^{-1}$ over all possible POVMs. 
This problem is rather difficult 
simply because of an optimization of non-scalar quantity over matrix spaces with certain constraints. 
Thus, the strategy to minimize the weighted trace of the inverse of Fisher information matrix 
is another view into quantum parameter estimation problem. Let us call 
\be\label{mibound}
C_\theta^{\mathrm{MI}}[W]:=\min_{\Pi: \mathrm{POVM}}\Tr{W (J_\theta[\Pi])^{-1}}, 
\ee
the most informative precision allowed by quantum mechanics. It is known that 
$ C_\theta[W]=C_\theta^{\mathrm{MI}}[W]$ holds in general \cite{nagaoka89}, 
and Fisher information plays an important role even in quantum parameter estimation theory.  

To define quantum version of logarithmic derivatives and Fisher information, 
we first introduce an inner product for any linear operators and 
then define quantum Fisher information based on the inner product. 
It happens that there is no unique way to define an inner product for quantum cases, 
meaning that we have many different quantum versions of Fisher information. 
In the following, we use two kinds of quantum Fisher information based on 
symmetric logarithmic derivative (SLD) and right logarithmic derivative (RLD) operators. 
For a given quantum state $\rho_{\theta}$ and any (bounded) linear operators $X,Y$ on $\cH$, 
we define symmetric and right inner product by 
\begin{align} \nonumber
\sldin{X}{Y}&:=\tr{\rho_{\theta}(YX^\dagger+X^\dagger Y)}, \\
\rldin{X}{Y}&:=\tr{\rho_{\theta}YX^\dagger}, 
\end{align}
respectively. 
SLD operators $L_i$ and RLD  operators $\tilde{L}_i$ are formally defined by the solutions to the operator equations:
\begin{align}\nonumber
\del_i\rho_{\theta}&=\frac12 (\rho_{\theta}L_{\theta,i}+L_{\theta,i}\rho_{\theta}), \\
\del_i\rho_{\theta}&=\rho_{\theta}\tilde{L}_{\theta,i}. 
\end{align}
It is not difficult to see that the SLD operators are hermite, whereas 
RLD operators are not in general. The SLD Fisher information 
matrix is defined by 
\begin{align} 
G_{\theta}&:= \left[ g_{\theta, ij}\right]_{i,j\in\{1,\dots,k\}} \\ \nonumber
g_{\theta, ij}&:=\sldin{L_{\theta,i}}{L_{\theta,j}}=\tr{\rho_{\theta}\frac12 \big(L_{\theta,i}L_{\theta,j}+L_{\theta,j}L_{\theta,i}  \big)}, 
\end{align}
and the RLD Fisher information is 
\begin{align} \nonumber
\tilde{G}_{\theta}&:= \left[ \tilde{g}_{\theta, ij} \right]_{i,j\in\{1,\dots,k\}}, \\
\tilde{g}_{\theta, ij}&:=\rldin{\tilde{L}_{\theta,i}}{\tilde{L}_{\theta,j}}=\tr{\rho_{\theta}\tilde{L}_{\theta,j}\tilde{L}_{\theta,i}^\dagger}. 
\end{align}

The quantum versions of CR inequality state that for any locally unbiased estimators its MSE matrix satisfies 
\begin{align}\nonumber
V_{\theta}[\hat{\Pi}]&\ge G_\theta^{-1},   \\
V_{\theta}[\hat{\Pi}]&\ge \tilde{G}_\theta^{-1}. 
\end{align}
These are referred to as the SLD CR inequality and RLD CR inequality, respectively.  
For notational convenience, the $(i,j)$ component of the inverse of 
the SLD Fisher information is denoted as $g_{\theta}^{ij}$, i.e., $G_\theta^{-1}=[g_\theta^{ij}]_{i,j\in\{1,\dots,k\}}$. 
Unlike the classical CR bound, there is no estimator $\hat{\Pi}$ in general attaining the 
equalities in the above inequalities. 
Combining the above considerations, one can show that for any POVMs 
the following relation holds: 
\be\label{qcmonotone}
V_{\theta}[\hat{\Pi}]\ge (J_{\theta}[\Pi])^{-1}\ge G_\theta^{-1},
\ee
and similarly for the RLD Fisher information. This inequality again emphasizes 
importance of Fisher information since the true bound lies in-between 
$J_\theta[\Pi]$ and $G_\theta$.  

Before closing this subsection, we have several remarks regarding quantum Fisher information. 
First, quantum Fisher information should be used as a collective noun rather than a proper noun 
since there are many quantum versions of Fisher information in general.  

Second, among existing many quantum Fisher information, SLD and RLD Fisher information 
stand as special ones \cite{petz}. The SLD Fisher metric is known as the minimum operator-monotone metric 
whereas the RLD Fisher metric is the maximum one. This is a well-known result, but this does not 
imply the matrix inequality $\tilde{G}_\theta\ge G_\theta$ in general. That is, 
there is no ordering between $\tilde{G}_\theta$ and $G_\theta$ in general. 
The valid relation holds for real part of the RLD Fisher information and SLD Fisher information:  
\be\label{rerldsld}
\mathrm{Re}\, \tilde{G}_\theta\ge G_\theta, 
\ee
for any quantum parametric models. 

Third, the RLD Fisher information always dominates the SLD Fisher information 
when the number of parameters is equal to one. In this case, the SLD Fisher 
information is attainable by the projection measurement with respect to 
the spectral decomposition of the SLD operator, and RLD Fisher information 
does not provide important information as long as state estimation is concerned.  

Fourth, as in classical statistics, we assume some regularity condition for quantum 
parametric models to define quantum versions of Fisher information.  
Besides mathematical technical assumptions, the rank of a state is important. 
When the state is not full-rank, it is known that SLD operators 
and SLD Fisher information cannot be defined uniquely. 
However, modification of the inner products by taking an equivalent class 
provides a well-defined and unique SLD Fisher information \cite{fn95}. 

Last, quantum Fisher information is proper information quantity and satisfy 
important properties. To list a few: i) They are semi-definite positive matrix. 
ii) They do not increase when a quantum operation (completely positive map) is 
applied to the state. iii) They are convex with respect to quantum states. 
iv) They are additive for product states. 

\subsection{SLD CR , RLD CR, and Holevo bounds}
Within our formulation of the problem, there are several bounds for the weighted trace 
of the MSE matrix \eqref{opt}. The first one is the SLD CR bound defined by
\be\label{sldcr}
C_\theta^S[W]:=\Tr{WG_\theta^{-1}},  
\ee
and this leads to the bound for any locally unbiased estimators as
\be
\Tr{WV_\theta[\hat{\Pi}]}\ge C_\theta^S.
\ee

The second one utilizes the RLD Fisher information and 
the following relation; 
\[
V\ge X\ \Rightarrow\  \Tr{WV}\ge\Tr{W\mathrm{Re}X}+\Trabs{W \mathrm{Im}X},
\]
for a positive matrix $W$, real symmetric matrix $V$, and Hermite matrix $X$. 
Here, TrAbs$X$ denotes the trace of absolute values of eigenvalues of the matrix $X$, 
i.e., TrAbs$X=\sum_i|\lambda_i| $ with $X=\sum_i \lambda_i \ket{i}\bra{i}$ an 
eigenvalue decomposition of $X$. Since the RLD Fisher information is 
complex-valued in general, the above inequality gives the RLD CR bound: 
\be
C_\theta^R[W] :=\Tr{W\mathrm{Re}\,\tilde{G}_\theta^{-1}}+\mathrm{TrAbs}\{W\mathrm{Im}\,\tilde{G}_\theta^{-1}\} . 
\ee

The bound for quantum model which unifies the above bounds is due to Holevo 
and it is referred to as the Holevo bound \cite{holevo}. 
Denote a $k$ array of Hermite operators on $\cH$ by 
\[
\vec{X}:=(X^1,X^2,\dots, X^k),\quad (X^\ell)^\dagger=X^\ell\ (\ell=1,2,\dots,k),
\]
and define the set of $\vec{X}$ by
\be \label{Xset}
\cX_\theta:=\{\vec{X}\,|\, \forall i\,\tr{\rho_\theta X^i}=0,\ \forall i,j\,\tr{\del_i\rho_\theta X^j}=\delta_{ij} \}.
\ee
The holevo function for quantum estimation is defined by
\be
h_\theta[\vec{X}|W]:=\Tr{W\mathrm{Re}\,Z_\theta[\vec{X}]}+\Trabs {W\mathrm{Im}\,Z_\theta[\vec{X}] }, 
\ee
where the $k\times k$ matrix $Z_\theta[\vec{X}]$ is
\be
Z_\theta[\vec{X}]:= [ \rldin{X^i}{X^j}]_{i,j\in\{1,\dots,k\}}. 
\ee
The Holevo bound is defined through the following optimization:
\be
C_\theta^H[W]:=\min_{\vec{X}\in\cX_\theta}h_\theta[\vec{X}|W]. 
\ee
Importantly, any locally unbiased estimators is bounded by the Holevo bound as 
\be
\Tr{WV_\theta[\hat{\Pi}]}\ge C_\theta^H[W], 
\ee
The Holevo bound can be attained asymptotically by 
an asymptotically unbiased estimator with a collective POVM in the following sense \cite{HM08,GK06,KG09,YFG13}. 
Consider $n$th i.i.d. quantum state $\rho_\theta^n=\rho_\theta^{\otimes n}$ for 
a given model and 
let $\hat{\Pi}^n$ be a sequence of estimators for the model $\cM^n_Q=\{\rho^n_\theta\,|\,\theta\in\Theta  \}$. 
An estimator is called asymptotically unbiased if the locally unbiased condition \eqref{lucond} 
hold for all values of $\theta$ in the $n\to\infty$ limit. Let us denote the MSE for the $n$th extension 
model as $V^n_\theta$, then, the Holevo bound has the operational meaning: 
\begin{multline}
C_\theta^H[W]=\inf\big\{ \limsup_{n\to\infty}\,n\Tr{W V^n_\theta[\hat{\Pi}^n]}\,\\
\big|\,\hat{\Pi}^n \mbox{ is asymptotically unbiased} \big\}. 
\end{multline}
%\be
%C_\theta^H[W]=\inf\big\{ \limsup_{n\to\infty}\,n\Tr{W V^n_\theta[\hat{\Pi}^n]}\,\big|\,\hat{\Pi}^n \mbox{is asymptotically unbiased.} \big\}. 
%\ee 
That is the optimal MSE behaves as $\Tr{W V^n_\theta[\hat{\Pi}^n]}\simeq C_\theta^H[W]/n$ for very large $n$ 
by performing the optimal sequence of collective POVMs. 
In this sense, the Holevo bound is considered as the {\it ultimate bound} in quantum 
parameter estimation problem. 

Several remarks are listed regarding relations among the SLD CR, RLD CR, and Holevo bounds. 
First, there are no ordering in general between the SLD CR bound and the RLD CR bound, 
despite the fact \eqref{rerldsld}. When the inverse of RLD Fisher information matrix 
has no imaginary entries, then \eqref{rerldsld} gives $C_\theta^S[W]\le C_\theta^R[W]$. 
This indicates importance of imaginary part of the RLD Fisher information. 
Second, the Holevo bound is always superior both to SLD CR and RLD CR bounds, i.e., 
\be
C_\theta^H[W]\ge C_\theta^S[W]\  \mathrm{and}\ C_\theta^H[W]\ge C_\theta^R[W].  
\ee
Third, when the number of parameters equal to one, the Holevo bound 
is identical to the SLD CR bound. 
Thus, collective measurements do not help to improve the accuracy of estimation. 
Fourth, when a model is so called D-invariant \cite{holevo, HM08}, 
the Holevo bound and the RLD CR bound coincide. In this case, 
the Holevo bound can be expressed as
\be
C_\theta^H[W]=h_\theta[\vec{L}_\theta|W]\quad (\mbox{for D-invariant model}), 
\ee
where $\vec{L}_\theta=(L^1_\theta,L^2_\theta,\dots,L^k_\theta)$ is the cotangent 
vector of SLD operators, i.e., $L^j_\theta=\sum_{i=1}^k (G_\theta^{-1})_{ij}L_{\theta,i}$ ($j=1,2,\dots,k)$. 
Thus, a D-invariant model possesses nice structure as a statistical model, 
and this condition is satisfied, for example, when the set of SLD operators together with the identity operator 
span the whole Hermite operators, i.e., $\mathrm{span}_{\bbr}\{ L_{\theta,1},L_{\theta,2},\dots,L_{\theta,k},I\}=\lofhh$ holds.

\subsection{Nagaoka and Hayashi-Gill-Massar bounds}
For two-dimensional quantum system, the quantum estimation problem 
is completely solved and the attainable bound can be calculated for an arbitrary quantum statistical model. 
This problem was solved for two-parameter case affirmatively by Nagaoka in the 80s \cite{nagaoka89}. 
Hayashi solved the case for three-parameter by utilizing the infinite dimensional 
linear programming method \cite{hayashi97}. Gill and Massar independently 
solved the same problem by different manner \cite{GM00}. 
We recommend a compact proof by Yamagata \cite{yamagata}. 
In the rest of paper, we call the bound for two-parameter case as 
the Nagaoka bound and the one for three-parameter case 
as the Hayashi-Gill-Massar (HGM) bound 
for the sake of convenience although the latter includes the former as a special case. 

Consider a complex two-dimensional Hilbert space $\bbc^2$ and a 
quantum parametric model on it.  For a given weight, 
the Nagaoka and the HGM bound for the weighted trace of MSE is given by
\be \label{n1bd}
\min_{\hat{\Pi}:\mathrm{l.u. at}\, \theta}  \mathrm{Tr}\{W V_\theta[\hat{\Pi}]\}
=\left(F(G_{\theta}^{-1},W)\right)^2=: C^{HGM}_{\theta}[W],
\ee
where $F(A,B)=\Tr{\sqrt{\sqrt{A}B\sqrt{A}}}$ denotes a fidelity 
between two semi-definite positive operators $A,B$. 
Nagaoka proved that this bound is more informative that the other bounds, i.e., 
$C^{HGM}_\theta[W]\ge C^H_\theta[W]$ holds \cite{nagaoka89}. 
The achievability of the HGM bound is known as the necessary and sufficient condition for POVMs, 
which states that the minimum is attained if and only if a POVM satisfies the condition \cite{yamagata}:
\be \label{optcond}
J_{\theta}[\Pi_{opt}]=\frac{ \sqrt{G_{\theta}} \sqrt{F_{\theta}} \sqrt{G_{\theta}}}{\Tr{\sqrt{F_{\theta}}}}. 
\ee

One way to compute the fidelity between $A$ and $B$ is to 
calculate the eigenvalues of the hermite operator $\sqrt{A}B\sqrt{A}$, 
and to compute the sum of square root of all eigenvalues.  
To proceed further we introduce the following $k\times k$ real symmetric matrix 
and assume its diagonalized form as  
\be\label{diagrep}
F_{\theta}=\sqrt{G_{\theta}^{-1}}W\sqrt{ G_{\theta}^{-1}}  
=U_{\theta}\Lambda_{\theta}U_{\theta}^{-1}, 
\ee
where $U_{\theta}$ is real orthogonal matrix and $\Lambda_{\theta}$ is a diagonalized 
matrix whose elements are the eigenvalues of $F_{\theta}$ given as 
$\Lambda_{\theta}=\mathrm{diag}(\lambda_1,\lambda_2,\dots,\lambda_k)$. 
With these notations, the HGM bound $\eqref{n1bd}$ is expressed as 
\be\label{n1bd2}
C^{HGM}_\theta[W]
=\sum_{i=1}^{k}\lambda_i+\sum_{i\neq j}\sqrt{\lambda_i\lambda_j}. 
\ee
Using the fact that the sum of eigenvalues of hermite matrix is equal to its trace, 
the first term in the right hand side of Eq.~\eqref{n1bd2} is written as 
$\sum_{i=1}^{k}\lambda_i=\Tr{WG_{\theta}^{-1}}$. 
This term corresponds to the SLD CR bound and 
noting the eigenvalues of the matrix $F_{\theta}$ are positive in general, 
we see that the HGM bound is strictly larger than the SLD CR bound for any weight matrix $W$, i.e., 
\be
C^{HGM}_\theta[W]>\Tr{WG_{\theta}^{-1}}. 
\ee
This shows that the SLD CR bound cannot be 
attained for a generic qubit problem. Two well-known exceptions are: 
one-parameter case ($k=1$) and the case where all SLD operators commute with each other. 

An optimal POVM was explicitly constructed by Nagaoka \cite{nagaoka91,FN99}, 
and its general form is given as follows \cite{yamagata}. 
Let $\hat{L}_{\theta,i}$ ($i=1,2,\dots,k$) be a linear combination of 
SLD operators defined by 
\be \label{sldhat}
\hat{L}_{\theta,i}:= \sum_{j=1}^k \left[ U_{\theta}^{-1}\sqrt{ G_{\theta}^{-1}}\right]_{ij}L_{\theta,j},
\ee
where $U_{\theta}$ is the matrix diagonalizing $F_{\theta}$ in Eq.~\eqref{diagrep}, 
and $G_{\theta}$ is the SLD Fisher information matrix. 
Let ${\Pi}^{(i)}$ ($i=1,2,\dots,k$) be the projection measurement, 
or projection valued measure (PVM), about the observable $\hat{L}_{\theta,i}$, 
then the optimal POVM which attains the HGM bound is 
to perform the PVMs ${\Pi}^{(i)}$ with a corresponding probability: 
\be
p_i=\frac{\sqrt{\lambda_i}}{\sum_{j=1}^k\sqrt{\lambda_j}}.
\ee
Explicitly, writing ${\Pi}^{(i)}=\{\Pi_{i\pm}\}$ for binary outcome PVMs, 
the optimal POVM consists of $2k$ elements and is given by
\be
\Pi_{opt}=\left\{p_1\Pi_{1\pm},\dots,p_k\Pi_{k\pm}  \right\}.
\ee
The optimal estimator $\hat{\theta_i}$($i=1,\dots,k$) is to assign the following estimate 
upon the measurement outcomes:
\be \label{optest}
\hat{\theta}_i(x)=\theta_i+\sum_{j=1}^{k}\left(J_{ \theta}^{-1}\right)_{ij}\del_j \log p_{\theta}(x), 
\ee
with $x\in\cX=\{1\pm,\dots,k\pm\}$ and $p_{\theta}(x)=\tr{\rho_{\theta}\Pi_x }$. 
We remark that there are other forms of optimal POVMs known in literature \cite{GM00,hayashi97}

This optimal estimator $(\Pi,\hat{\theta})$ explicitly depends on 
the true value of the unknown parameter $\theta$. This might be considered 
as self-contradiction in the formalism. Indeed, some authors claim that 
finding unbiased estimator is rather purely of mathematical interest and is of no use. 
Here we mention that the formalism	based on locally unbiased estimators 
needs an additional ingredient when applying to real problem.  
It was first proposed by Nagaoka that one should perform the above 
optimal estimation adaptively, namely, when one starts with unknown value 
of parameters and then successively update the values according to 
measurement results \cite{nagaoka89-2}. 
The mathematical rigorous proofs for strong consistency and asymptotic efficiency 
of adaptive estimation are due to Fujiwara \cite{fujiwara06}. 
We take these mathematical justifications for granted to look for locally unbiased estimators. 

There is also an alternative way to achieve the bound 
obtained for the locally unbiased estimators 
by using two-step estimation strategy \cite{HM98,BNG00}. 
In this method, one take a few fraction of $n$ copies, say $\sqrt{n}$, to 
estimate the value of $\theta$ and then perform the optimal locally unbiased estimator 
for the remaining $n-\sqrt{n}$ copies. 
Finally, we remark that Yamagata shows that 
the adaptive estimation method works more efficiently than the standard quantum tomographic scheme 
in qubit system \cite{yamagata}. 
An adaptive estimation scheme for one-parameter case was experimentally demonstrated in Ref.~\cite{oioyift12}. 

\section{Estimation of qubit states in presence of unknown phase}\label{sec3}
The quantum parametric model under consideration is given 
by the family of quantum states on the two-dimensional Hilbert space $\bbc^2$, 
i.e, qubit states:  
\be
\rho_{\theta}=\frac{1}{2}\left(\begin{array}{cc}1+\theta_2 & \theta_1 \Exp{-\I\theta_3} \\[.5ex] 
\theta_1 \Exp{\I\theta_3} & 1-\theta_2\end{array}\right) \in \sofc2, 
\ee
where $\theta=(\theta_1,\theta_2,\theta_3)$ satisfy $\theta_1^2+\theta_2^2<1$ and $\theta_3\in[0,2\pi)$ 
and we exclude the point $\theta_1=0$ for the sake of mathematical convenience. 
This condition guarantees that the state is full-rank for all values of $\theta\in\Theta$. 
It is useful to go from matrix representation of a state to the three dimensional vector representation, 
so called the Bloch vector representation. 
This is given by a one-to-one mapping as
%\be
$\sofh\ni\rho \longmapsto \v{s}=\tr{\v\sigma \rho}\in \mathbb{R}^3$,  
%\ee
where $\v \sigma=(\sigma_1,\sigma_2,\sigma_3)^T$ denotes 
the set of usual Pauli spin operators. 
Requirement of unit trace and positivity imposes on the vector 
such that the length of this vector is less or equal to one. 
The space of all possible Bloch vectors is defined as 
surface and interior of unit sphere: 
$\cB =\left\{b\in\mathbb{R}^3\,\big|\, |b|\le1  \right\}$. 
The inverse mapping from a given Bloch vector $\v s$ to matrix representation is 
%\be
$\cB\ni \v{s}\longmapsto \rho=(\sigma_0+\v{s}\cdot\v{\sigma}  )/2\in\sofh$, 
%\ee
with $\sigma_0$ the identity $2\times2$ matrix and 
$\v{s}\cdot\v{\sigma}=\sum_{i=1,2,3}s_i\sigma_i$. 
Thus, the Bloch vector representation of  our quantum model is  
\be
\v{s}_{\theta}=(\theta_1\cos\theta_3,\,\theta_1\sin\theta_3,\,\theta_2)^T. 
\ee
For later convenience, we define the standard inner product and 
the outer product for three dimensional vectors by 
\begin{align*}
\vecin{\v a}{\v b}&=\sum_{i=1,2,3}a_ib_i,\\
\ket{\v a}\bra{\v b}&=[a_ib_j]_{i,j\in\{1,2,3\}},
\end{align*}
respectively. The outer product is $3\times3$ matrix whose action onto a vector 
$\v c\in\bbr^3$ is $\ket{\v a}\bra{\v b} {\v c}=\vecin{\v b}{\v c}{\v a}$.

In our problem, the phase parameter $\theta_3$ is of no interest, which is called a {\it nuisance parameter}, 
and we wish to discuss how well we can estimate this parametric model in presence 
of the nuisance parameter $\theta_3$. In the following, we first solve the problem 
when $\theta_3$ is known (no nuisance parameter) and then to solve the case with unknown phase parameter. 
In our analysis, we introduce an important concept, {\it mean square error region} defined as follows \cite{comment}. 
Given a quantum parametric model $\cM_Q$ 
and the bound for the MSE matrix $ \mathrm{Tr}\{W V_\theta[\hat{\Pi}]\}\ge C_\theta[W]$, 
we define the set of all possible MSE matrices allowed by locally unbiased estimators:  
\be
D_{\mathrm{l.u.}}=\{V\in M_h\,\big|\, V=V_{\theta}[\hat{\Pi}] ; \hat{\Pi} \mbox{ is locally unbiased at }\theta  \},
\ee
and the set of positive matrices allowed by the given bound: 
\be
D_C:= \{V\in M_{h}\, |\, \Tr{WV}\ge C_\theta[W],\  \forall W>0\}.
\ee
Here $M_{h}$ denotes the set of all $2\times2$ symmetric matrices, i.e., 
$M_{h}=\left\{ V\in\mathbb{R}^{2\times 2}\,\Big|\, V^T=V\right\}$.  
It is not difficult to show that two sets are equivalent, i.e., $D_{\mathrm{l.u.}}=D_C$, 
if the bound is achievable. We call $D_{\mathrm{l.u.}}$ a MSE region and 
we shall analyze $D_C$ in the following based on the Nagaoka and the HGM bound.

\subsection{No nuisance parameter case}
In this subsection, we assume that the phase parameter $\theta_3$ is known with infinite precision. 
The number of parameters to be estimated is two and the straightforward calculation shows 
the inverse of the SLD Fisher information is given by
\be \label{sldF2}
G_{\theta}^{-1}
=\left(\begin{array}{cc}1-\theta_1^2 & -\theta_1\theta_2 \\[.5ex] 
 -\theta_1\theta_2  & 1-\theta_2^2\end{array}\right). 
\ee
With this simple structure of SLD Fisher information, we have 
$ \Tr{G_{\theta}^{-1}}-1=\det G_{\theta}^{-1}=1-s_{\theta}^2$ where $s_{\theta}=(\theta_1^2+\theta_2^2)^{1/2}$ 
denotes the length of the Bloch vector. 
Eigenvalues of the $2\times 2$ matrix defined for the Nagaoka bound \eqref{diagrep} are given by 
\begin{align} \label{lambda} 
&\lambda_{1,2}=\frac12\left(t_{\theta}\pm\sqrt{\Delta_{\theta}} \right),\\\nonumber
&t_{\theta}=\Tr{WG_{\theta}^{-1}},\ \Delta_{\theta}= t_{\theta}^2-4\det (WG_{\theta}^{-1}). 
\end{align}
The Nagaoka bound $C_\theta^N[W]$ is thus written as 
\be \label{nhgm2}
C^{N}_\theta[W]=\Tr{WG_{\theta}^{-1}}+2\sqrt{\det WG_{\theta}^{-1}}.
\ee
An optimal POVM attaining this bound is given as follows.  
Consider rank-1 projectors:
\be \label{optPVM2}
P_{\theta,1(2)}=E_\theta\,\frac{{W}-{\lambda}_{2(1)}G_{\theta}}{\Tr{{W}}-{\lambda}_{2(1)}\Tr{G_{\theta}}}\,E_\theta^T, 
\ee
where $E_\theta$ is a $3\times2$ real matrix,
\be\label{ematrix}
E_\theta:= \left(\begin{array}{cc}\cos\theta_3 & 0 \\ \sin\theta_3 & 0 \\0 & 1\end{array}\right), 
\ee 
and write them as $P_{\theta,i}=\ket{\v{n}_i}\bra{\v{n}_i}$ with $\v{n}_i$ unit vectors. 
We remark that $\v{n}_1$ and $\v{n}_2$ are not orthogonal in general. 
An optimal POVM is then written as 
\begin{align} \label{optPOVM2}
\Pi_{opt}&=\left\{p_1\Pi_{1+},\, p_1\Pi_{1-},\,p_2\Pi_{2+},\, p_2\Pi_{2-}  \right\},\\ \nonumber
\Pi_{i\pm}&=\frac{1}{2}\left(\sigma_0\pm\v{n}_i\cdot \v\sigma  \right) ,\ 
p_{1,2}=\frac12(1\pm \cos 2 q_\theta), 
%p=\Big[\frac{t_{\theta}-2\sqrt{\det WG_{\theta}^{-1}}}{t_{\theta}+2\sqrt{\det WG_{\theta}^{-1}}}\Big]^{1/2}. 
\end{align}
where $q_\theta= \arctan [(1-\Delta_\theta/t_\theta^2)^{1/4}]$. 
With this optimal POVM and the estimator \eqref{optest}, the value of MSE matrix 
is given by 
\be \nonumber
V_{\theta}[\hat{\Pi}_{opt}]=J_{\theta}[\Pi_{opt}]\,^{-1}=G_{\theta}^{-1}+\sqrt{\det WG_{\theta}^{-1}}\, {W}^{-1}. 
\ee

As an example, let us consider the case for the identity wight matrix, 
which corresponds to estimating two parameters with equal footing. 
In this case, the optimal POVM takes a rather simple form:
\begin{align*}
\Pi_{1\pm}&=\frac12 (\sigma_0\pm \frac{\v{s}_{\theta}}{s_\theta}\cdot \v\sigma)\ \mbox{with}\  
p_1=\frac{\sqrt{1-s_\theta^2}}{1+\sqrt{1-s_\theta^2}},\\
\Pi_{2\pm}&=\frac12 (\sigma_0\pm \frac{\v{s}^{\bot}_{\theta}}{s_\theta}\cdot \v\sigma)\ \mbox{with}\  
p_2=\frac{1}{1+\sqrt{1-s_\theta^2}}, 
\end{align*}
and the estimator $\hat{\theta}(\pm)=\left(\hat{\theta}_1(\pm),\hat{\theta}_2(\pm)\right)$:
\begin{align*}
\left(\hat{\theta}_1(\pm),\hat{\theta}_2(\pm)\right)&=\Big[1\pm\frac{1}{p_1}\frac{1-s_\theta}{s_\theta}\Big]({\theta}_1,{\theta}_2)\quad \mathrm{for}\ \Pi_1,\\
\left(\hat{\theta}_1(\pm),\hat{\theta}_2(\pm)\right)&=({\theta}_1,{\theta}_2)\pm\frac{1}{p_2s_\theta}(-\theta_2,\theta_1)\ \mathrm{for}\ \Pi_2. 
\end{align*}
Since the state under estimation is given by $\rho_\theta=(\sigma_0+\v{s}_\theta\cdot\v{\sigma}  )/2$, 
the first PVM $\Pi_1$ suggests to measure along the same direction as the unknown state. 
The second PVM, however, suggests us to measure along the perpendicular direction 
$\v{s}^{\bot}_{\theta}=(\theta_2\cos\theta_3,\,\theta_2\sin\theta_3,\,-\theta_1)^T$. 
This seems rather counter intuitive since the probability distribution upon 
the measurement $\Pi_2$ on $\rho_\theta$ is $1/2$, i.e., completely random outcomes, 
and hence this does not provide us any useful information to estimate the value $\theta$. 
To understand this optimal estimator, we note that both PVMs and estimators 
do depend on the true values of parameters $\theta$ and we can only 
attain this optimal quantum estimator by adaptively in $n\to\infty$ limit. 
As emphasized before, this is one formulation of quantum parameter estimation 
within locally unbiased estimators.  

To characterize the MSE region obtained from the Nagaoka bound 
for this problem, we note the following fundamental lemma: 
\begin{lemma}\label{lemma1}
Let $c$ be a positive constant, the following two sets are equivalent. 
\begin{align}\nonumber
D_1&=\{V\in M_{h}\,|\,\Tr{XV}\ge 2c\sqrt{\det X},\ \forall X>0 \}, \\ \nonumber
D_2&=\{V\in M_{h}\,|\,\det V \ge c^2,\  V>0 \}.
\end{align} 
\end{lemma}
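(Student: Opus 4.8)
The plan is to prove the two inclusions $D_2\subseteq D_1$ and $D_1\subseteq D_2$ separately, exploiting that both sides of the defining inequality of $D_1$ are homogeneous of degree one in $X$ (for $2\times2$ matrices $\det X$ is quadratic, so $\sqrt{\det X}$ is linear in $X$). This scale invariance lets me normalize $\det X=1$, so that membership in $D_1$ becomes equivalent to $\Tr{XV}\ge 2c$ for every $X>0$ with $\det X=1$. The whole argument then rests on two elementary facts about $2\times2$ positive matrices: (i) for any $X,V>0$ the product $XV$ has positive eigenvalues (it is similar to $\sqrt{X}V\sqrt{X}>0$), so the arithmetic-geometric mean inequality gives $\Tr{XV}\ge 2\sqrt{\det(XV)}=2\sqrt{\det X\,\det V}$; and (ii) this bound is sharp, namely $\min\{\Tr{XV}\mid X>0,\ \det X=1\}=2\sqrt{\det V}$.

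For the inclusion $D_2\subseteq D_1$ I would simply combine these. If $V>0$ and $\det V\ge c^2$, then for every $X>0$ fact (i) yields $\Tr{XV}\ge 2\sqrt{\det X}\,\sqrt{\det V}\ge 2c\sqrt{\det X}$, which is exactly the condition defining $D_1$. This direction is immediate once fact (i) is in hand.

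The reverse inclusion $D_1\subseteq D_2$ splits into first showing $V>0$ and then $\det V\ge c^2$. To force positivity, suppose $V\in M_h$ has an eigenvalue $\lambda_1\le0$, with orthonormal eigenvectors $v,w$ and eigenvalues $\lambda_1,\lambda_2$. Testing the $D_1$ condition on the diagonal family $X=a\,vv^T+b\,ww^T$ (positive definite for $a,b>0$) gives $a\lambda_1+b\lambda_2\ge 2c\sqrt{ab}$; fixing $b$ and letting $a\to\infty$ keeps the left side bounded above while the right side diverges, a contradiction. Hence both eigenvalues are strictly positive and $V>0$. Once positivity is established I invoke fact (ii): the $D_1$ condition reads $\Tr{XV}\ge 2c$ for all $X>0$ with $\det X=1$, so taking the infimum on the left gives $2\sqrt{\det V}\ge 2c$, i.e. $\det V\ge c^2$, placing $V$ in $D_2$.

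The only substantive step is fact (ii), and establishing it is where I expect the real work to lie. The clean route is the substitution $Y=\sqrt{V}X\sqrt{V}$, valid once $V>0$: this is a bijection of the positive cone with $\Tr{XV}=\Tr{Y}$ and $\det Y=\det V\,\det X=\det V$, so the problem reduces to minimizing the trace of a $2\times2$ positive matrix of fixed determinant, whose minimum $2\sqrt{\det V}$ is attained when $Y=\sqrt{\det V}\,I$. Everything else—the homogeneity reduction, fact (i), and the positivity argument—is routine, so the crux is simply packaging this single optimization cleanly.
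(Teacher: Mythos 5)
Your proof is correct and follows essentially the same route as the paper: your substitution $Y=\sqrt{V}X\sqrt{V}$ is the paper's change of variables $X\to V^{-1/2}XV^{-1/2}$, and your fact (ii) (minimum trace $2\sqrt{\det Y}$ at fixed determinant) is exactly the paper's observation that $f(X)=\sqrt{\det X}/\Tr{X}$ attains its maximum value $1/2$. Your explicit eigenbasis test forcing $V>0$ is actually more careful than the paper's one-line assertion, which as written would only yield $V\ge 0$ unless one invokes the $2c\sqrt{\det X}$ term as you do.
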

\begin{proof}
This lemma can be shown as follows. 
From $\Tr{XV}\ge 2c\sqrt{\det X}>0$ for all $X>0$, we have $V>0$. 
Change the positive matrix as $X\to V^{-1/2} X V^{-1/2}> 0$, 
we have , $\Tr{X}\ge 2c\sqrt{\det XV^{-1}}\Leftrightarrow \det{V}\ge 2c \sqrt{\det{X}}/\Tr{X}$. 
Note for $2\times2$ matrix $X$ 
a functional $f(X):=\sqrt{\det X}/\Tr{X}$ has the maximum and $\forall X>0, 1/2\ge f(X)>0$ holds. 
With this we can show $D_1\subset D_2$. This argument can be reversed to show the converse inclusion. 
$\square$
\end{proof}

With this lemma, we state our first result: 
\begin{proposition}
The following sets are all equivalent. 
\begin{align*}
%\hspace{-0mm}D_{\ell.u.}&=\left\{V_{\theta}[\hat{\Pi}] \,\big|\, \hat{\Pi} \mbox{ is locally unbiased at }\theta  \right\},\\
D_{N}&=\{V\in M_{h}\,|\,   \Tr{W V}\ge C^{N}_\theta[W],\  \forall W>0 \},\\
D_{GM}&=\{V\in M_{h}\ \,|\, \Tr{G_\theta^{-1} V^{-1}}\le 1,\ V>G_{\theta}^{-1} \}, \\
D&=\{V\in M_{h}\ \,|\, \det(V-G_{\theta}^{-1})\ge \det G_{\theta}^{-1},\ V>G_{\theta}^{-1}\}.  
%&\hspace{4cm}  v_{ww} >g^{ww}, v_{pp}>g^{pp} \Big\}  
\end{align*}
\end{proposition}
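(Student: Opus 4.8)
The plan is to establish the two equalities $D_N=D$ and $D=D_{GM}$ separately, obtaining the first from Lemma~\ref{lemma1} and the second from elementary two-dimensional matrix identities.

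First I would prove $D_N=D$. Inserting the explicit Nagaoka bound \eqref{nhgm2} into the defining inequality of $D_N$ and using $\det(WG_\theta^{-1})=\det W\,\det G_\theta^{-1}$, the condition $\Tr{WV}\ge C^N_\theta[W]$ rearranges into $\Tr{W(V-G_\theta^{-1})}\ge 2c\,\sqrt{\det W}$ for all $W>0$, where $c:=\sqrt{\det G_\theta^{-1}}$ is positive since $G_\theta^{-1}>0$. Reading $V-G_\theta^{-1}$ as the matrix variable, this is verbatim the defining property of the set $D_1$ in Lemma~\ref{lemma1}. The lemma therefore converts it into the pair of conditions $\det(V-G_\theta^{-1})\ge c^2=\det G_\theta^{-1}$ and $V-G_\theta^{-1}>0$, which is exactly the description of $D$. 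In particular the strict positivity $V>G_\theta^{-1}$ is not an extra hypothesis but is produced automatically by the lemma, consistently with \eqref{qcmonotone}.

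Next I would show $D=D_{GM}$. Both sets share the constraint $V>G_\theta^{-1}$, so only the two scalar inequalities need to be matched; write $A:=G_\theta^{-1}$. Applying the Cayley--Hamilton relation for $2\times2$ matrices, $\det V\cdot V^{-1}=\Tr{V}\,I-V$, the $D_{GM}$ condition $\Tr{AV^{-1}}\le1$ becomes $\Tr{V}\Tr{A}-\Tr{AV}\le\det V$ (using $\det V>0$). On the other side, the $2\times2$ determinant-of-sum identity $\det(V-A)=\det V+\det A-\Tr{V}\Tr{A}+\Tr{AV}$ turns the $D$ condition $\det(V-A)\ge\det A$ into the identical inequality $\det V\ge\Tr{V}\Tr{A}-\Tr{AV}$. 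Hence the two conditions coincide and $D=D_{GM}$, completing the chain.

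I do not expect a genuine conceptual obstacle: both reductions are forced once the correct shift $V\mapsto V-G_\theta^{-1}$ is made and the dimension-two identities are invoked. The only point requiring care is the bookkeeping of the strict positivity clause, making sure that $V>G_\theta^{-1}$ is treated uniformly as it appears both as an explicit constraint in $D$ and $D_{GM}$ and as an output of Lemma~\ref{lemma1} in the $D_N$ description. These algebraic identities are special to $2\times2$ matrices, which is precisely the feature that makes the qubit estimation problem exactly solvable.
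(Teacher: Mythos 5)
Your proposal is correct and follows essentially the same route as the paper: the equivalence $D_N=D$ via Lemma~\ref{lemma1} applied to the shifted variable $V-G_\theta^{-1}$, and $D=D_{GM}$ via the direct $2\times2$ computation showing $\Tr{G_\theta^{-1}V^{-1}}\le 1\Leftrightarrow\det(V-G_\theta^{-1})\ge\det G_\theta^{-1}$. Your write-up merely makes explicit the identities (Cayley--Hamilton and the determinant-of-sum formula) that the paper leaves as ``a direct calculation,'' and your bookkeeping of the strict-positivity clause is sound.
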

\begin{proof}
Equivalence between $D_{N}$ and $D$ follows from lemma \ref{lemma1}.  
The other relation $D=D_{GM}$ follows from a direct calculation which 
shows $\Tr{AV^{-1}}\le 1\Leftrightarrow \det (V-A)\ge \det A$ for all $A\in M_h$ and $V>0$. $\square$
\end{proof}

From the expression of $D$, we see that in general there is trade-off relation 
between errors in $\theta_1$ and $\theta_2$. 
We note that a similar trade-off relation was obtained in Ref.~\cite{wsu11} 
for any two observables in any finite dimensional quantum systems. However, 
their result heavily depends on the choice of parametrization for quantum states, 
and they only consider diagonal elements of the MSE matrix. We 
emphasize that all entries in the MSE matrix are important and the 
most general trade-off is 
\be
\det(V_\theta[\hat{\Pi}]-G_{\theta}^{-1})\ge \det G_{\theta}^{-1},
\ee
whereas SLD CR bound gives $\det(V_\theta[\hat{\Pi}]-G_{\theta}^{-1})>0$ 
and $\Tr{V_\theta[\hat{\Pi}]-G_{\theta}^{-1}}>0$. 

\subsection{Nuisance parameter case}
In this section, we treat the phase parameter $\theta_3$ as unknown and 
discuss how well we can estimate the parameter $\theta_1,\theta_2$. 
Let us first briefly recall for classical parameter estimation theory with nuisance parameters \cite{lc,bnc}. 
Consider a probability distribution 
$p_{\theta}(x)$ on $\cX$, $\theta=(\theta_1,\theta_2,\dots,\theta_k)$, where 
$\theta_{I}=(\theta_1,\theta_2,\dots,\theta_p)$ (parameters of interest) and 
$\theta_{N}=(\theta_{p+1},\theta_{p+2},\dots,\theta_k)$ (parameters of no interest, nuisance parameters). 
Let $J_{\theta}$ be the Fisher information matrix and consider block matrix representations as 
\be\nonumber
J_\theta=\left(\begin{array}{cc}J_{\theta_I\theta_I} & J_{\theta_I\theta_N} \\[0.1ex] 
J_{\theta_N\theta_I}& J_{\theta_N\theta_N}\end{array}\right), \ 
J_\theta^{-1}=\left(\begin{array}{cc}J^{\theta_I\theta_I} & J^{\theta_I\theta_N} \\[0.1ex] 
J^{\theta_N\theta_I}& J^{\theta_N\theta_N}\end{array}\right), 
\ee
in terms of the parameter grouping $\theta=(\theta_I,\theta_N)$. 
When $\theta_{N}$ are completely known, MSE for any unbiased estimators obeys 
\[
\ds V_{\theta}\ge ( J_{\theta_I\theta_I})^{-1}, 
\]
where all known values for $\theta_N$ are substituted to $p\times p$ matrix $J_{\theta_I\theta_I}$. 
When $\theta_{N}$ are not known, on the other hand, the MSE satisfies 
\[
\ds V_{\theta}\ge J^{\theta_I\theta_I}. 
\]
It is well-known that the following matrix inequality 
\begin{align}\nonumber
J^{\theta_I\theta_I}&=(J_{\theta_I\theta_I}-J_{\theta_I\theta_N}J_{\theta_N\theta_N}^{-1}J_{\theta_N\theta_I})^{-1}\\
&\ge ( J_{\theta_I\theta_I})^{-1}, \label{nuiineq}
\end{align}
holds where the equality holds if and only if the off-diagonal block matrix vanishes $J_{\theta_I\theta_N} =0$.  
In this case we say that two sets of parameters $\theta_I$ and $\theta_N$ are orthogonal with respect 
to Fisher information. These two CR inequalities show that 
when the MSE becomes in general worse in presence of nuisance parameters. 

We now consider our problem for quantum case. We can show that 
the primary parameter $\theta_I=(\theta_1,\theta_2)$ are orthogonal to the nuisance parameter $\theta_3$ 
with respect to the SLD Fisher information, and 
the inverse of SLD Fisher information matrix for three parameter case reads 
\be \label{sldF3}
G_{\theta}(3)^{-1}=\left(\begin{array}{cc}G_{\theta}^{-1} & \begin{array}{c}0 \\[-1mm] 0\end{array} \\[0ex]
0\ 0 & g_\theta^{33}\end{array}\right),
\ee
where $G_{\theta}^{-1}$ is same matrix given in \eqref{sldF2} and $g_\theta^{33}=1/\theta_1^2$. 
Clearly, $g^{33}_\theta$ diverges when $\theta_1=0$ simplely because 
we cannot have any information about $\theta_3$ at this point. Physically, 
this singularly is trivial since we cannot have any interference fringe at $\theta_1=0$. 
Thus, we justify the reason why we excluded the point $\theta_1=0$ in our model. 

This structure of the SLD Fisher information matrix might suggest that 
the bound \eqref{nhgm2} shown in the previous section holds even in presence of the nuisance parameter $\theta_3$. 
It is, however, not clear how to attain this bound without knowing the value of $\theta_3$. 
This is due to the fact that the optimal measurement \eqref{optPOVM2} explicitly depends on 
the unknown phase $\theta_3$, in particular the projectors \eqref{optPVM2}. 
To treat the effect of nuisance parameter in quantum case, we need 
to study the problem for estimating three parameters and to discuss 
trade-off between errors in $\theta_I$ and $\theta_N$. 

The HGM bound for generic three-parameter case can be written down as shown before. 
For the general $3\times3$ weight matrix, we have not gotten 
a simple expression for the HGM bound $C^{HGM}_\theta[W]$. 
To proceed our analysis, we write $3\times 3$ MSE and consider a special class of weight matrix as follows. 
\be\nonumber
V_{\theta}^{(3)}=\left(\begin{array}{cc}V_{\theta} & \begin{array}{c}v_{13} \\[-1mm] v_{23}\end{array} \\[0ex]
v_{31}\ v_{32} & v_{33}\end{array}\right),\  \label{bdform}
W(3)=\left(\begin{array}{cc}W& \begin{array}{c}0 \\[-1mm] 0\end{array} \\ 0\ 0 & w_3\end{array}\right),
\ee
where $V_{\theta}$ and $W$ are $2\times2$ matrices analyzed before. 
For this specific choice of the weight matrix, the HGM bound can 
be expressed in terms of $C^{N}_\theta[W] $ \eqref{nhgm2} as 
\begin{align} \nonumber
&\Tr{V_{\theta}^{(3)}W(3)}= \Tr{W V_\theta[\hat{\Pi}]}+w_3 v_{33} \ge C^{HGM}_\theta[W(3)],\\ 
& C^{HGM}_\theta[W(3)]=\Big(\sqrt{C^{N}_\theta[W]}+ \sqrt{w_3g_\theta^{33}}\Big)^2. \label{nhgm3}
 \end{align}
Let us denote the set of all symmetric and nonnegative positive $3\times 3$ matrices by 
$M_{h}^{(3)}$ and $M_{+}^{(3)}$, respectively, define the sets of positive matrices by
\begin{align*}
D_{HGM}&=\{V\in M_{h}^{(3)}\,|\,   \Tr{W V}\ge C^{HGM}_\theta[W],\  \forall W>0 \},\\
\tilde{D}_{HGM}&=\{V\in M_{+}^{(3)} \,|\,   \Tr{W(3) V}\!\ge\! C^{HGM}_\theta[W(3)],  \forall W(3)>0 \},\\
%D(3)&=\{V>0 \,|\, \det(V_2- \gamma_{\theta} G_{\theta}^{-1})\ge \det (\gamma_{\theta} G_{\theta}^{-1}), V_2>\gamma_{\theta} G_{\theta}^{-1},\ v_{33}>g_{\theta}^{33}\},   
D(3)&=\{V\in M_{+}^{(3)} \,|\, \det(V_2- \gamma_{\theta} G_{\theta}^{-1})\ge \det (\gamma_{\theta} G_{\theta}^{-1}),\\
&\hspace{3cm} V_2>\gamma_{\theta} G_{\theta}^{-1},\ v_{33}>g_{\theta}^{33}\},   
\end{align*}
where $\gamma_{\theta}$ is an important quantity defined by
\be \label{gamma1}
\gamma_{\theta}[\hat{\Pi}]=\frac{v_{33}[\hat{\Pi}]}{v_{33}[\hat{\Pi}]-g_{\theta}^{33}},
\ee
and $V_2=[v_{ij}]_{i,j\in\{1,2\}}$ in $D(3)$ is a $2\times2$ block matrix. 
The inclusion $\tilde{D}_{HGM}\subset D_{HGM}$ is trivial from the definition. 
With the same line of logic as before, we obtain the following result:
\begin{proposition}
$D(3)=\tilde{D}_{HGM}\subset D_{HGM}$ holds. 
\end{proposition}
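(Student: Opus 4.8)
The plan is to establish the equality $D(3)=\tilde{D}_{HGM}$ directly; since the inclusion $\tilde{D}_{HGM}\subset D_{HGM}$ is immediate from the definitions, this equality carries all the content. I would first exploit the block-diagonal form of the weight. Writing $\Tr{W(3)V}=\Tr{WV_2}+w_3v_{33}$, the off-diagonal entries $v_{13},v_{23}$ drop out entirely, so the defining inequality of $\tilde{D}_{HGM}$ constrains only the block $V_2$ and the scalar $v_{33}$, the off-diagonal entries being governed solely by the ambient condition $V\in M_+^{(3)}$. Substituting the explicit bound \eqref{nhgm3}, membership in $\tilde{D}_{HGM}$ is equivalent to
\[
\Tr{WV_2}+w_3v_{33}\ge\Big(\sqrt{C^N_\theta[W]}+\sqrt{w_3g_\theta^{33}}\Big)^2
\qquad\forall W>0,\ \forall w_3>0.
\]

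The central step is to decouple the two quantifiers. For fixed $W$ I would put $s=\sqrt{w_3}$ and expand, reducing the condition to nonnegativity of the quadratic $\alpha s^2-\beta s+\delta$ for all $s>0$, where $\alpha=v_{33}-g_\theta^{33}$, $\beta=2\sqrt{C^N_\theta[W]g_\theta^{33}}>0$, and $\delta=\Tr{WV_2}-C^N_\theta[W]$. Nonnegativity on the whole positive axis forces the leading coefficient to be positive, giving $v_{33}>g_\theta^{33}$, and evaluation at the vertex gives $\delta\ge\beta^2/(4\alpha)$, which rearranges to the single $w_3$-free inequality
\[
\Tr{WV_2}\ge\gamma_\theta\,C^N_\theta[W]\qquad\forall W>0,
\]
with $\gamma_\theta=v_{33}/(v_{33}-g_\theta^{33})$ exactly as in \eqref{gamma1}. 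The point that makes this clean is that $\gamma_\theta$ depends on $v_{33}$ alone and not on $W$, so the optimization over $w_3$ and the residual quantifier over $W$ genuinely separate.

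It then remains to recast the surviving inequality into the form of Lemma \ref{lemma1}. Inserting $C^N_\theta[W]=\Tr{WG_\theta^{-1}}+2\sqrt{\det(WG_\theta^{-1})}$ from \eqref{nhgm2} and using $\det(WG_\theta^{-1})=\det W\,\det G_\theta^{-1}$ yields
\[
\Tr{W(V_2-\gamma_\theta G_\theta^{-1})}\ge 2\gamma_\theta\sqrt{\det G_\theta^{-1}}\,\sqrt{\det W}\qquad\forall W>0.
\]
This is precisely the hypothesis of Lemma \ref{lemma1} with $X=W$, the symmetric matrix $V_2-\gamma_\theta G_\theta^{-1}$, and constant $c=\gamma_\theta\sqrt{\det G_\theta^{-1}}$. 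The lemma converts it into the equivalent pair $V_2>\gamma_\theta G_\theta^{-1}$ and $\det(V_2-\gamma_\theta G_\theta^{-1})\ge c^2=\gamma_\theta^2\det G_\theta^{-1}=\det(\gamma_\theta G_\theta^{-1})$, the last step using homogeneity of the $2\times2$ determinant. Together with $v_{33}>g_\theta^{33}$ and $V\in M_+^{(3)}$ these are exactly the defining conditions of $D(3)$; since every step above is an equivalence, this proves $D(3)=\tilde{D}_{HGM}$.

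I expect the main obstacle to be the decoupling step: one must check that optimizing over the nuisance weight $w_3$ reproduces the precise factor $\gamma_\theta$ of \eqref{gamma1} and, crucially, that this factor is independent of $W$, so that Lemma \ref{lemma1} applies with a genuine constant rather than a $W$-dependent quantity. A secondary point to keep in order is the bookkeeping that the block-diagonal weights leave the off-diagonal MSE entries unconstrained apart from $V\ge0$, consistent with $D(3)$ restricting only $V_2$ and $v_{33}$.
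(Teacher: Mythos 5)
Your proof is correct and follows essentially the same route the paper intends: the paper gives no explicit argument beyond ``with the same line of logic as before,'' meaning exactly your steps --- specialize to block-diagonal weights via Eq.~\eqref{nhgm3}, eliminate $w_3$ by the quadratic-in-$\sqrt{w_3}$ argument to produce the $W$-independent factor $\gamma_\theta$ of Eq.~\eqref{gamma1} together with $v_{33}>g_\theta^{33}$, and then apply Lemma~\ref{lemma1} to convert $\Tr{W(V_2-\gamma_\theta G_\theta^{-1})}\ge 2\gamma_\theta\sqrt{\det G_\theta^{-1}}\sqrt{\det W}$ into the determinant condition defining $D(3)$. Your write-up simply makes explicit the decoupling step that the paper leaves implicit, and each step is indeed an equivalence, so the equality $D(3)=\tilde{D}_{HGM}$ follows.
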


Consequences of the above result are emphasized here: 
First, even though the value of $\theta_3$ is unknown, the structure of 
the MSE region $D(3)$ is the same as the previous region $D$ for the two-parameter case. 
The change solely enters as the scaling factor $\gamma_{\theta}$ which depends on 
MSE of $\theta_3$, i.e., $v_{33}[\hat{\Pi}]$, and this factor is 
strictly larger than $1$. 
This implies the relation 
\begin{multline}
D\subsetneq D_2(3):=\{V\in M_h| \det(V- \gamma_{\theta} G_{\theta}^{-1})\ge \det (\gamma_{\theta} G_{\theta}^{-1}),\\ V>\gamma_{\theta} G_{\theta}^{-1}\},
\end{multline}
%\be
%D\subset D_2(3)=\{V\in M_h| \det(V- \gamma_{\theta} G_{\theta}^{-1})\ge \det (\gamma_{\theta} G_{\theta}^{-1}),\,V>\gamma_{\theta} G_{\theta}^{-1}\},
%\ee
for each given value of the error $v_{33}[\hat{\Pi}]$. 
Second, the trade-off between errors in $\theta_I=(\theta_1,\theta_2)$ and $\theta_3$ is 
understood. The smaller error in $\theta_3$ gives the larger $\gamma_{\theta}$ 
resulting in large error in $\theta_I$. We then wish to make $v_{33}$ so large 
that $\gamma_{\theta}\simeq 1$. But, this means that we cannot 
perform the optimal POVM \eqref{optPOVM2} precisely. This kind of 
trade-off is typical in quantum theory and we think the general formalism 
to deal with effects of nuisance parameters in quantum estimation theory 
deserves further studies. 
If the SLD CR bound is used in stead of the HGM bound, 
we have the following MSE region obtained form the SLD CR bound: 
\be
D_{SLD}(3)=\{V\in M_{h}^{(3)} \,|\, V_2\ge G_{\theta}^{-1},\ v_{33}\ge g_{\theta}^{33}\}.   
\ee
This is different from the MSE matrix allowed by quantum mechanics.  
In particular, $D(3) \subsetneq D_{SLD}(3)$ holds. This shows that 
one should analyze achievable bound when considering the effect of nuisance parameters in general. 

\subsection{Achievability of the bound for no nuisance parameter}
In this subsection, we discuss achievability of the bound \eqref{nhgm2} which was 
derived for the case of no nuisance parameter. 
In particular, we show that the above bound with the nuisance parameter 
provides the same bound in the asymptotic limit. 

This is a direct consequence of simple structure of MSE region $D(3)$. 
It is well-known that the additivity 
of SLD Fisher information gives $G_{\theta}(3) \to n G_{\theta} (3)$ for 
the $n$th i.i.e. extended model $\cM_Q=\{\rho_\theta^{\otimes n}\,|\,\theta\in\Theta \}$. 
Let us consider an estimation strategy in which we use a fraction of 
$n$ states, say $\sqrt{n}$, to estimate $\theta_3$ and use the rest 
$n-\sqrt{n}$ states to estimate the relevant parameters $\theta_1,\theta_2$. 
Since the MSEs scales as $V_{\theta_1,\theta_2}\propto (n-\sqrt{n})^{-1} $ 
and $v_{33}\propto n^{-1/2}$, we see that the factor $\gamma_\theta$ 
scales as $\gamma_{\theta}\simeq 1$ for sufficiently large $n$. 
Therefore, the MSE region for $V_{\theta_1,\theta_2}$ converges  
to that of no nuisance parameter in $n\to\infty$ limit. 

To translate the above picture into more formal language, let us consider 
the $n$th i.i.e. extended model and consider an estimator with separable POVMs. 
Denoting the MSE matrix for this extended model as $V_\theta[\hat{\Pi}^n_{\mathrm{sep}}]$, 
the relation \eqref{qcmonotone} and the general bound \eqref{mibound} provide, 
\be
V_\theta[\hat{\Pi}^n_{\mathrm{sep}}]\ge \frac{1}{n}(J_\theta[\Pi])^{-1}\ge \frac{1}{n} C^{\mathrm{MI}}_\theta[W]. 
\ee
Let us write the rescaled MSE matrix as 
$V_\theta[\hat{\Pi}^n_{\mathrm{sep}}]\simeq \overline{V_\theta}[\hat{\Pi}^n_{\mathrm{sep}}]/n$, 
the above inequality and the HGM bound for three parameters, i.e., with the nuisance parameter, 
give us, $ \det(\overline{V_{\theta_I}}[\hat{\Pi}^n_{\mathrm{sep}}]
 -\overline{\gamma_{\theta}} G_{\theta}^{-1})\ge \det (\overline{\gamma_{\theta}} G_{\theta}^{-1})$, 
where $\overline{V_{\theta_I}}$ is the rescaled $2\times 2$ MSE matrix for $\theta_I=(\theta_1,\theta_2)$ and 
$\overline{\gamma_{\theta}}= {\overline{v_{33}}[\hat{\Pi}]}/{(\overline{v_{33}}[\hat{\Pi}]-g_{\theta}^{33})}$ 
is the rescaled factor. If we apply the considered estimation strategy, 
we have
\begin{align}\nonumber
& \det\left(\frac{n}{n-\sqrt{n}}\overline{V_{\theta_I}}[\hat{\Pi}^n_{\mathrm{sep}}]
 -\overline{\gamma_{\theta}} G_{\theta}^{-1}\right)\ge \det (\overline{\gamma_{\theta}} G_{\theta}^{-1}), \\
& \overline{\gamma_{\theta}}= \frac{\overline{v_{33}}[\hat{\Pi}]}{\overline{v_{33}}[\hat{\Pi}]-n^{-1/2}g_{\theta}^{33}}
\end{align}
We thus see that the MSE matrix $\overline{V_{\theta_I}}[\hat{\Pi}^n_{\mathrm{sep}}]$ 
for the relevant parameters satisfies 
\be
\det(\overline{V_{\theta_I}} -G_{\theta}^{-1})\ge \det (G_{\theta}^{-1}),
\ee
in the $n\to\infty$ limit. 

We next discuss more efficient way to achieve the previous bound 
based on the optimal POVM for the Nagaoka bound \eqref{optPOVM2}.  
Given sufficiently large $n$ copies of quantum states $\rho_{\theta}$, 
we split $n$ into $\sqrt{n}$ and the rest $n-\sqrt{n}$. 
Let us use the first group to estimate the nuisance parameter $\theta_3$ and 
let the MSE be $v_{33}=c_{33} n^{-1/2}$. With this precision, we use the remaining 
$n-\sqrt{n}$ states to estimate $\theta_I=(\theta_1,\theta_2)$ with the optimal estimator 
described by \eqref{optPOVM2}. The limit $n\to\infty$ then leads to the bound \eqref{nhgm2}. 

To see this argument quantitatively, let us assume that the true value 
for $\theta_3$ is $\theta^*_3$. We first make an estimate  as $\theta^*_3+\delta \theta_3$ with 
$\delta \theta_3$ a standard deviation ($\simeq$ square root of MSE). 
With this estimate let us perform the POVM of the form \eqref{optPOVM2}. 
We note that error in $\theta_3$ solely enters in the matrix \eqref{ematrix} 
and the straightforward calculation shows the effect of this deviation 
gives rise to the change of parameters: 
\be
(\theta_1,\theta_2)\to(\theta_1\cos\delta\theta_3,\theta_2).
\ee 
Therefore, the classical Fisher information matrix about this sub-optimal measurement 
outcomes with this error in $\theta_3$ is expressed as 
\be
 \Delta_\theta J_{\theta}[\Pi_{opt}]\, \Delta_\theta \mbox{ with } 
 \Delta_\theta=\left(\begin{array}{cc}  \cos\delta\theta_3&0 \\0 & 1 \end{array}\right).
\ee
Clearly, for small error $\delta\theta_3\simeq c_{33}/\sqrt{n}$ we can 
approximate $ \cos\delta\theta_3\simeq 1- c_{33}^2/2n$ and 
this decreases faster enough to conclude that the Nagaoka bound \eqref{nhgm2} 
can be achieved at each $\theta_I=(\theta_1,\theta_2)$ asymptotically for a given weight matrix $W$. 

\section{Asymptotic bound: Holevo bound}\label{sec4}
In this section we shall discuss the Holevo bound for our parametric model. 
As stated in Section \ref{sec2}, the Holevo bound can be achieved by a 
collective POVM $\hat{\Pi}^n$ acting on $\rho^{\otimes n}$ in the asymptotic limit. 
We will see that the Holevo bounds are different whether the phase parameter 
is known or not. 

We first list the inverse of SLD and RLD Fisher information matrix 
for the model. 
When the phase parameter is completely known the model 
is two-dimensional and we have
\be \label{2sldrld}
G_{\theta}^{-1}
=\left(\begin{array}{cc}1-\theta_1^2 & -\theta_1\theta_2 \\[.5ex] 
 -\theta_1\theta_2  & 1-\theta_2^2\end{array}\right),\quad
\tilde{G}_\theta^{-1}=(1-s_\theta^2)\left(\begin{array}{cc}1 & 0 \\[0.5ex]
0 & 1\end{array}\right). 
\ee
%\begin{align} \nonumber
%G_{\theta}^{-1}
%&=\left(\begin{array}{cc}1-\theta_1^2 & -\theta_1\theta_2 \\[.5ex] 
% -\theta_1\theta_2  & 1-\theta_2^2\end{array}\right),\\
%\tilde{G}_\theta^{-1}&=(1-s_\theta^2)\left(\begin{array}{cc}1 & 0 \\[0.5ex]
%0 & 1\end{array}\right). 
%\end{align}
Therefore, the RLD Fisher is real and we easily see that the SLD Fisher is more informative 
than RLD Fisher information, i.e., $G_{\theta}^{-1}\ge \tilde{G}_\theta^{-1}$. 
When the phase parameter $\theta_3$ is not known precisely 
and needs to be estimated, 
the inverse matrices of two quantum Fisher information are 
\begin{align}\nonumber
G_{\theta}(3)^{-1}&=
\left(\begin{array}{ccc}1- \theta_1^2& -\theta_1\theta_2 &0  \\[.5ex] -\theta_1\theta_2 & 1-\theta_2^2 & 0 \\[.5ex]0 & 0 & 1/\theta_1^2\end{array}\right),\\
\tilde{G}_\theta(3)^{-1}&=
\left(\begin{array}{ccc}1- \theta_1^2& -\theta_1\theta_2 & -\I\theta_2/\theta_1 \\[.5ex]
 -\theta_1\theta_2 & 1-\theta_2^2 & \I \\[.5ex]
 \I\theta_2/\theta_1 & -\I & 1/\theta_1^2\end{array}\right). 
\end{align}
%\be
%G_{\theta}^{-1}(3)=
%\left(\begin{array}{ccc}1- \theta_1^2& -\theta_1\theta_2 &0  \\[.5ex] -\theta_1\theta_2 & 1-\theta_2^2 & 0 \\[.5ex]0 & 0 & 1/\theta_1^2\end{array}\right),\quad
%\tilde{G}_\theta^{-1}(3)=
%\left(\begin{array}{ccc}1- \theta_1^2& -\theta_1\theta_2 & -\I\theta_2/\theta_1 \\[.5ex]
% -\theta_1\theta_2 & 1-\theta_2^2 & \I \\[.5ex]
% \I\theta_2/\theta_1 & -\I & 1/\theta_1^2\end{array}\right). 
%\ee
It is easy to see that $\mathrm{Re}\,\{\tilde{G}_\theta(3)^{-1}\}=G_{\theta}(3)^{-1}$ and 
the difference of two matrices is neither positive nor negative, that is 
there is no ordering between $\tilde{G}_\theta(3)$ and $G_{\theta}(3)$. 

\subsection{No nuisance parameter case}
The Holevo bound can be evaluated by an optimization 
over the tangent space at $\theta$: $T_\theta=\mathrm{span}_{\bbr}\{ L_{\theta,1},L_{\theta,2},\dots,L_{\theta,k}\}$. 
A straightforward calculation shows that the Holevo bound coincides with the SLD CR bound. 
Alternate way to see this simple fact is as follows. Consider the cotangent vectors of SLD operators 
defined by
\be
L^j_\theta=\sum_{i=1}^2 (G_\theta^{-1})_{ij}L_{\theta,i}\ (j=1,2).
\ee
By inserting $\vec{X}=(L^1_\theta,L^2_\theta)=:\vec{L_\theta}$ in the Holevo function $h_\theta[X|W]$, 
we see that the imaginary part of the matrix $Z_\theta[\vec{X}]$ vanishes. 
In this case the Holevo function coincides with the SLD CR bound, i.e.,
\be
C_\theta^H[W]=h_\theta[\vec{L}_\theta|W]=\Tr{WG_\theta^{-1}}=C_\theta^S[W] . 
\ee
Using the simple fact $\Tr{WA}\ge 0,\ \forall W>0\Rightarrow A\ge0$ for any $k\times k$ real symmetric matrix, 
we arrive at rather remarkable result: For any locally unbiased estimator, its MSE matrix satisfies
\be
V_\theta[\hat{\Pi}]\ge G_\theta^{-1}, 
\ee
where the equality can be attained with a sequence of collective POVMs 
which are asymptotically unbiased in the $n\to\infty$ limit, 
i.e., $\lim_{n\to\infty}n V_\theta[\hat{\Pi}^n]=G_\theta^{-1}$. 
Correspondingly, the MSE region allowed by the Holevo bound is 
\be
D_H:=\{ V\in M_h\,|\, V\ge G_\theta^{-1}\}. 
\ee
This proves the SLD CR bound can be achievable in the asymptotic limit, 
even though two SLD operators do not commute. 

\subsection{Nuisance parameter case}
Any qubit model of estimating three parameters becomes 
D-invariant if all SLD operators are linearly independent. 
This is true for our case as well and the Holevo bound is identical to the RLD CR bound. 
Thus, we have
\be
C^H_\theta[W]=\Tr{W G_\theta(3)^{-1}}+\mathrm{TrAbs}\{W\mathrm{Im}\,\tilde{G}_\theta(3)^{-1}\}.
\ee
The second term can be simplified as follows. Let $A=W\mathrm{Im}\tilde{G}_\theta(3)^{-1}$ 
be the $3\times 3$ real matrix whose eigenvalues are to be calculated. 
This matrix $A$ has good symmetry which gives 
\be
\Tr{A^\ell}=0\ \mbox{for odd }\ell, 
\ee
and $\det{A}=0$. The Caley-Hamilton theorem gives that the eigenvalues of $A$ are 
$0$, $\pm\sqrt{\Tr{A^2}/2 }$. The second term of the Holevo bound is 
written as
\be
\mathrm{TrAbs}\{W\mathrm{Im}\,\tilde{G}_\theta(3)^{-1}\}=\sqrt{2\Tr{(W\mathrm{Im}\,\tilde{G}_\theta(3)^{-1})^2}}. 
\ee

If we set the weight matrix $W$ as the form of the block diagonal one \eqref{bdform}, 
the above term reads 
\begin{multline}\
\mathrm{TrAbs}\{W(3)\mathrm{Im}\,\tilde{G}_\theta(3)^{-1}\}\\
=2\sqrt{w_3g_\theta^{33}}\sqrt{\Tr{W(G_\theta^{-1}-\tilde{G}_\theta^{-1})}}.
\end{multline}
%\be
%\mathrm{TrAbs}\{W(3)\mathrm{Im}\,\tilde{G}_\theta^{-1}(3)\}
%=2\sqrt{w_3g_\theta^{33}}\sqrt{\Tr{W(G_\theta^{-1}-\tilde{G}_\theta^{-1})}}.
%\ee
Here $W$ is the $2\times 2$ block matrix and 
$G_\theta^{-1}$ and $\tilde{G}_\theta^{-1}$ are the inverse of SLD and RLD Fisher information 
matrix for the known phase case, i.e., Eqs.~\eqref{2sldrld}. 
The final form of the Holevo bound is
\begin{multline}\label{holevow3}
C^H_\theta[W(3)]=\Tr{W G_\theta^{-1}}+w_3g_\theta^{33}\\
+2\sqrt{w_3g_\theta^{33}}\sqrt{\Tr{W(G_\theta^{-1}-\tilde{G}_\theta^{-1})}}. 
\end{multline}
%\be\label{holevow3}
%C^H_\theta[W(3)]=\Tr{W G_\theta^{-1}}+w_3g_\theta^{33}+2\sqrt{w_3g_\theta^{33}}\sqrt{\Tr{W(G_\theta^{-1}-\tilde{G}_\theta^{-1})}}. 
%\ee
By analyzing $\Tr{W(3)V_\theta}\ge C^H_\theta[W(3)]$ for all $W(3)>0$ as before, we obtain 
the MSE region allowed by the Holevo bound as
\begin{align}\nonumber
D_H(3)=\{ V\in M_{+}(3)\,|\,
V_2&\ge\gamma_\theta G_\theta^{-1}-(\gamma_\theta-1) \tilde{G}_\theta^{-1} ,\\
& V_2> G_\theta^{-1},\ v_{33}>g_\theta^{33}\},  
\end{align}
where $\gamma_\theta$ is defined by Eq.~\eqref{gamma1}. 
From this result, we see that the first term corresponds to the case of no nuisance 
parameter with the scaling factor $\gamma_\theta$. The second term, 
which is a negative matrix, represents non-trivial contribution from collective measurements.  
To see the structure of this Holevo bound, we rewrite the right hand side as 
$V_2\ge G_\theta^{-1}+(\gamma_\theta-1)(G_\theta^{-1}- \tilde{G}_\theta^{-1})\ge G_\theta^{-1}$. 
The last matrix inequality follows from $\gamma_\theta>1$ and $G_\theta^{-1}\ge \tilde{G}_\theta^{-1}$. 
Clearly, this shows that the Holevo bound for two-parameter case with no nuisance parameter 
cannot be attained exactly. However, by the same argument as before, 
one can find a sequence of POVMs such that $\gamma_\theta\to 1$ in the asymptotic limit. 
The MSE matrix for relevant parameters $\theta_1,\theta_2$ behaves as $V_{\theta_I} \simeq G_\theta^{-1}/n$.

\subsection{Comparison and discussion}
From these above results, we can expect that the ultimate bound can be quite different 
in general for quantum estimation problem wether there are nuisance parameters or not. 
This is because the error in the nuisance parameters enter as the bound of 
the MSE matrix for the relevant parameters. 

The model studied in this paper is a very special one in the sense that 
quantum Fisher information and all bounds do not depend on the value of the phase parameter 
(nuisance parameter). When the precision bound depends on the nuisance parameter, 
one has to substitute a rough estimate or adopt the worst case in order to 
derive a reliable bound for MSE for the relevant parameters. 

We also point out that our model meets the orthogonal condition with respect the SLD Fisher information. 
This orthogonality condition plays an important role in classical estimation problem 
and it guarantees the equality in \eqref{nuiineq}, that is, the bounds become same 
regardless whether there are nuisance parameters or not. In the quantum case, on the other hand, 
our result indicates that quantum version orthogonality condition itself does not 
conclude the same bound for the nuisance parameter case. In the first place, 
there are many quantum versions of Fisher information and we cannot 
say $\theta_I$ and $\theta_N$ are orthogonal in general. Indeed, it happens in our model that they 
are orthogonal with respect to the SLD Fisher information but not for the RLD Fisher information. 
Although our model is a very simple qubit system, it contains interesting and unique features of 
quantum parameter estimation problem. 

Let us briefly compare the bounds for separable POVMs and collective POVMs. 
In our model, the case of no nuisance parameter 
states that the Nagaoka bound \eqref{nhgm2}, which is truly greater than the SLD CR bound, 
can be improved significantly up to the SLD CR bound by collective measurements.  
When the value of phase $\theta_3$ is not known with infinite precision, 
the Holevo bound is strictly greater than the SLD CR bound. 
Collective POVMs improves the HGM bound, but we cannot reach the SLD CR bound for finite $n$. 
The analysis of this problem indicates: 
i) Importance of imaginary parts of RLD Fisher information and 
ii) proper treatment of nuisance parameters in quantum estimation problem. 

\section{Summary and outlook}\label{sec5}
In this paper, we have discussed a simple quantum two-dimensional parametric model 
with unknown (nuisance) parameter. It is clear that if the nuisance parameter is not 
orthogonal to the parameters of interest, one cannot ignore the effects of nuisance 
parameter in general. The case for unknown phase parameter was analyzed, 
and we have shown that the bound can be achieved asymptotically. 
More detailed asymptotic behavior of the optimal estimator with nuisance parameter 
should be studied as future work. 

The general structure for quantum parameter estimation theory with nuisance parameters needs to 
be explored. We do not know if orthogonal nuisance parameters can always be 
estimated similarly as was done in this paper. It is clear that non-orthogonal parameters cannot be 
estimated with the same error even in asymptotically. This conclusion directly follows from classical 
statistics particularly the general inequality \eqref{nuiineq}. However, in the quantum case, orthogonality 
condition does not guarantee efficient estimation as discussed in the previous subsection. 
This is also because optimal measurements in general 
depend on the unknown parameters and more detailed analysis needs to be involved. 

There are many important examples where the effects of nuisance parameters are 
important. An immediate application is quantum metrology in presence of 
unavoidable noises. The values of noises are not known with infinite precision by definition. 
Hence, one should take into account the fact that small errors in knowledge of noise parameters 
might spoil the efficient estimation which go beyond the classical precision scaling. 
These are largely unexplored territories and we shall make progress in due course. 

\section*{Acknowledgement}
The author is indebted to Prof.~Hiroshi Nagaoka for invaluable discussions and suggestions. 
He thanks Huangjun Zhu for providing information about the manuscript \cite{hjz14}. 

\section*{Appendix. Formulas}
We list useful formulas for computing SLD and RLD operators and the corresponding Fisher information 
based on the Bloch vectors. 
For a given qubit model, we can also regarded it as a model described by three dimensional 
real vector:
\be \label{qbmodel}
\cM_{\cB}=\left\{\v{s}_{\theta}\in\cB\, |\, \theta\in\Theta\subset{\mathbb R}^k \right\}.
\ee
Given a quantum statistical model \eqref{qbmodel}, SLD and RLD operators are expressed as 
\begin{align} \label{sldbrep}
L_{\theta,i}&=-\frac{\vecin{\del_i \stheta}{\stheta}}{1-s_\theta^2} \sigma_0+\left(\del_i\stheta
+ \frac{\vecin{\del_i \stheta}{\stheta}}{1-s_\theta^2}\stheta\right)\cdot \v\sigma \\  \nonumber
%&=\frac{1}{1-s_\theta^2}\left\{-\vecin{\del_i \stheta}{\stheta} \sigma_0
%+\left[\del_i\stheta+ \stheta\times (\stheta\times\del_i\stheta)\right]\cdot \v\sigma\right\},\\
\tilde{L}_{\theta,i}&=\frac{1}{1-s_\theta^2}\left[-\vecin{\del_i \stheta}{\stheta} \sigma_0
+\left(\del_i\stheta-\I \stheta\times\del_i\stheta\right)\cdot \v\sigma\right].
\end{align}
SLD and RLD Fisher information matrices read rather simple expressions as 
\begin{align} 
g_{\theta,ij}&=\vecin{\del_i\stheta}{\del_j\stheta}
+\frac{\vecin{\del_i\stheta}{\stheta}\vecin{\stheta}{\del_j\stheta}}{1-s_\theta^2} \\ \nonumber
&=\frac{1}{1-s_\theta^2}\left(\vecin{\del_i\stheta}{\del_j\stheta}-\vecin{\del_i\stheta\times\stheta}{\del_j\stheta\times\stheta}  \right), \\
\tilde{g}_{\theta,ij}&=\frac{1}{1-s_\theta^2}\left(\vecin{\del_i\stheta}{\del_j\stheta}
+\I\vecin{\del_i\stheta\times\del_j\stheta}{\stheta}  \right). 
\end{align}
In the above expressions, $s_\theta^2\equiv \vecin{\stheta}{\stheta}$ denotes the 
square of the length of the Bloch vector $\stheta$. 

The Holevo bound can also be expressed in terms of Bloch vectors as follows. 
Let $T_{\theta,i}^{\bot}=\{\v x\in\bbr^3\,|\, \vecin{x}{\del_i \stheta}=0 \}$ 
be the orthogonal space to the $i$th derivative of the Bloch vector. 
A linear operator which satisfies $\tr{\rho_\theta X}=0$ and $\del_i\tr{\rho_\theta X}=0$ 
can be expressed as 
\be
X=-\vecin{\stheta}{\v x}\sigma_0+{\v x}\cdot{\v \sigma}\mbox{ with } \v x\in T_{\theta,i}^{\bot}, 
\ee
and an element of the set appeared in the definition \eqref{Xset} takes the form of 
$\vec{X}=(X^1,X^2,\dots,X^k)$ with 
\begin{align}\nonumber
X^i&=-\vecin{\stheta}{\v x}^i\sigma_0+{\v x}^i\cdot{\v \sigma},\\
\v{x}^i&\in\bigcap_{j\neq i} T_{\theta,j}^{\bot},\quad \vecin{\v{x}^i}{\del_i\stheta}=1.
\end{align}
Using this form of Bloch vector representation, the $Z_\theta[\vec{X}]$matrix reads
\begin{align}\nonumber
\mathrm{Re} \,z_\theta^{ij}[X]&=\vecin{\v x^i}{\v x^j}-\vecin{\v x^i}{\stheta}\vecin{\stheta}{\v x^j},\\
\mathrm{Im} \,z_\theta^{ij}[X]&=-\vecin{\v x^i\times\v x^j}{\stheta}.
\end{align}
As noted in the text, the Holevo bound coincides with the RLD CR bound when $k=3$ for any qubit system 
if all SLD operators are linearly independent. 
Thus, the Holevo bound for $k=2$ is of interest and needs to be analyzed. 
With straightforward calculations, we have
\begin{multline}
h_\theta[X|W]=\sum_{i,j=1}^2w_{ij}(\vecin{\v x^i}{\v x^j}-\vecin{\v x^i}{\stheta}\vecin{\stheta}{\v x^j}) \\
+2\sqrt{\det W}\big|\vecin{\v x^i\times\v x^j}{\stheta} \big|,
\end{multline}
%\be
%h_\theta[X|W]=\sum_{i,j=1}^2w_{ij}(\vecin{\v x^i}{\v x^j}-\vecin{\v x^i}{\stheta}\vecin{\stheta}{\v x^j}) 
%+2\sqrt{\det W}\big|\vecin{\v x^i\times\v x^j}{\stheta} \big|,
%\ee 
for a given weight matrix $W=[w_{ij}]_{i,j\in\{1,2\}}$. 
Note this is a quadratic form with respect to $\v x^i$ and the Holevo 
bound can be obtained by standard optimization procedure.

\end{document}